\theoremstyle{plain}
\newtheorem{thm}{\protect\theoremname}
\newenvironment{lyxlist}[1]
{\begin{list}{}
{\settowidth{\labelwidth}{#1}
 \setlength{\leftmargin}{\labelwidth}
 \addtolength{\leftmargin}{\labelsep}
 }}
{\end{list}}
\newcounter{meg}
\newcounter{pont}
\providecommand{\theoremname}{Theorem}
\begin{document}

\title{How to Move an Electromagnetic Field?}

\author{Márton Gömöri {\normalsize and} László E. Szabó\emph{\small }\\
\emph{\small Department of Logic, Institute of Philosophy}\\
\emph{\small Eötvös University, Budapest}\\
}

\date{~}
\maketitle
\begin{abstract}
The special relativity principle presupposes that the states of the
physical system concerned can be meaningfully characterized, at least
locally, as such in which the system is at rest or in motion with
some velocity relative to an arbitrary frame of reference. In the
first part of the paper we show that electrodynamic systems, in general,
do not satisfy this condition. In the second part of the paper we
argue that exatly the same condition serves as a necessary condition
for the persistence of an extended physical object. As a consequence,
we argue, electromagnetic field strengths cannot be the individuating
properties of electromagnetic field---contrary to the standard realistic
interpretation of CED. In other words, CED is ontologically incomplete.
\end{abstract}

\section{Introduction\label{sec:Introduction}}

The problem we address in this paper is on the border-line between
physics and metaphysics. We begin with the observation that the special
relativity principle (RP) is about the comparison of the behaviors
of physical systems in different states of inertial \emph{motion}
relative to an arbitrary inertial frame of reference. Therefore, it
is a minimal requirement for the RP to be a meaningful statement that
the states of the system in question must be meaningfully characterized
as such in which the system as a whole is at rest or in motion with
some velocity relative to an arbitrary frame of reference. Thus, to
apply the RP to classical electrodynamics (CED), it has to be meaningfully
formulated when an electrodynamic system---charged particles plus
electromagnetic field---is at rest or in motion relative to an inertial
frame of reference. In the first part of the paper we formulate a
minimal condition a solution of the Maxwell--Lorentz equations must
satisfy in order to describe such an electrodynamic configuration.
Then we prove that the solutions of the Maxwell--Lorentz equations,
in general, do not satisfy these conditions.

In the second part of the paper, we discuss the conceptual relationship
between the problem of motion and the problem of persistence. We argue
that persistence presupposes---zero or non-zero---velocity. One can
formulate a necessary condition for the persistence of an object,
in terms of its individuating properties. This condition implies that
the object must be\emph{ }in motion with some instantaneous velocity;
or, in case of an extended object, its local parts must be in motion
with some local and instantaneous velocities. At this point the problem
of persistence connects to the problem discussed in the first part
of the paper. As it is proved in Section~\ref{sec:Is-relativity-principle},
electromagnetic field does not satisfy this condition. Therefore,
we conclude, electromagnetic field cannot be regarded as a real physical
entity persisting in space and time; or, the field strengths cannot
be regarded as fundamental quantities individuating electromagnetic
field, that is, electrodynamics cannot be regarded as an ontologically
complete description of electromagnetic phenomena.

\section{The RP Is about the Behaviors of Physical Systems in Different States
of Motion}

The RP is one of the fundamental principles which must be satisfied
by all laws of physics describing any physical phenomena. Without
entering into the more technical formulation of the principle (see
e.g. Gömöri and Szabó~2013), we would like to focus on one particular
aspect, which is already clearly there in Galileo's first formulation:
\begin{quote}
Shut yourself up with some friend in the main cabin below decks on
some large ship, and have with you there some flies, butterflies,
and other small flying animals. Have a large bowl of water with some
fish in it; hang up a bottle that empties drop by drop into a wide
vessel beneath it. With the ship standing still, observe carefully
how the little animals fly with equal speed to all sides of the cabin.
The fish swim indifferently in all directions; the drops fall into
the vessel beneath; and, in throwing something to your friend, you
need throw it no more strongly in one direction than another, the
distances being equal; jumping with your feet together, you pass equal
spaces in every direction. When you have observed all these things
carefully (though doubtless when the ship is standing still everything
must happen in this way), have the ship proceed with any speed you
like, so long as the motion is uniform and not fluctuating this way
and that. You will discover not the least change in all the effects
named, nor could you tell from any of them whether the ship was moving
or standing still. In jumping, you will pass on the floor the same
spaces as before, nor will you make larger jumps toward the stern
than toward the prow even though the ship is moving quite rapidly,
despite the fact that during the time that you are in the air the
floor under you will be going in a direction opposite to your jump.
In throwing something to your companion, you will need no more force
to get it to him whether he is in the direction of the bow or the
stern, with yourself situated opposite. The droplets will fall as
before into the vessel beneath without dropping toward the stern,
although while the drops are in the air the ship runs many spans.
The fish in their water will swim toward the front of their bowl with
no more effort than toward the back, and will go with equal ease to
bait placed anywhere around the edges of the bowl. Finally the butterflies
and flies will continue their flights indifferently toward every side,
nor will it ever happen that they are concentrated toward the stern,
as if tired out from keeping up with the course of the ship, from
which they will have been separated during long intervals by keeping
themselves in the air. And if smoke is made by burning some incense,
it will be seen going up in the form of a little cloud, remaining
still and moving no more toward one side than the other. The cause
of all these correspondences of effects is the fact that \emph{the
ship's motion is common to all the things contained in it} {[}italics
added{]}, and to the air also. That is why I said you should be below
decks; for if this took place above in the open air, which would not
follow the course of the ship, more or less noticeable differences
would be seen in some of the effects noted. (Galilei 1953, 187)
\end{quote}
What is important for our present concern is that the principle is
about the comparison of the behaviors of physical systems---flies,
butterflies, fishes, droplets, smoke---\emph{in different states of
inertial motion} relative to an arbitrary inertial frame of reference.
In Brown's words:
\begin{quote}
The principle compares the outcome of relevant processes inside the
cabin under different states of inertial motion of the cabin relative
to the shore. It is simply assumed by Galileo that the same initial
conditions in the cabin can always be reproduced. What gives the relativity
principle empirical content is the fact that the differing states
of motion of the cabin are clearly distinguishable relative to the
earth\textquoteright{}s rest frame. (Brown 2005, 34)
\end{quote}
The RP describes the relationship between two situations: one is in
which the system, as a whole, is at rest relative to one inertial
frame, say $K$, the other is in which the system shows the similar
behavior, but being in a collective motion relative to $K$, co-moving
with some $K'$. In other words, the RP assigns to each solution $F$
of the physical equations, stipulated to describe the situation in
which the system is co-moving as a whole with inertial frame $K$,
another solution $M_{\mathbf{V}}(F)$, describing the similar behavior
of the same system when it is, as a whole, co-moving with inertial
frame\emph{ $K'$}, that is, when it is in a collective motion with
velocity $\mathbf{V}$ relative to $K$, where $\mathbf{V}$ is the
velocity of $K'$ relative to $K$. And it asserts that the solution
$M_{\mathbf{V}}(F)$, expressed in the primed variables of $K'$,
has exactly the same form as $F$ in the original variables of $K$.

Consequently, the following is a minimal requirement for the RP to
be a meaningful statement:

\paragraph*{Minimal Requirement for the RP (MR)}

\emph{The states of the system in question---described by the solutions
$F$---must be meaningfully characterized as such in which the system
as a whole is at rest or in motion with some velocity relative to
an arbitrary frame of reference.\medskip{}
}

Let us show a well-known electrodynamic example in which a particles
+ electromagnetic field system satisfies this condition. Consider
one single charged particle moving with constant velocity $\mathbf{V}=\left(V,0,0\right)$
relative to $K$ and the coupled stationary electromagnetic field
(Jackson~1999, 661): 
\begin{equation}
M_{\mathbf{V}}(F)\,\,\,\left\{ \begin{alignedat}{1}E_{x}(x,y,z,t) & =\frac{qX_{0}}{\left(X_{0}^{2}+\left(y-y_{0}\right)^{2}+\left(z-z_{0}\right)^{2}\right)^{\nicefrac{3}{2}}}\\
E_{y}(x,y,z,t) & =\frac{\gamma q\left(y-y_{0}\right)}{\left(X_{0}^{2}+\left(y-y_{0}\right)^{2}+\left(z-z_{0}\right)^{2}\right)^{\nicefrac{3}{2}}}\\
E_{z}(x,y,z,t) & =\frac{\gamma q\left(z-z_{0}\right)}{\left(X_{0}^{2}+\left(y-y_{0}\right)^{2}+\left(z-z_{0}\right)^{2}\right)^{\nicefrac{3}{2}}}\\
B_{x}(x,y,z,t) & =0\\
B_{y}(x,y,z,t) & =-c^{-2}VE_{z}\\
B_{z}(x,y,z,t) & =c^{-2}VE_{y}\\
\varrho(x,y,z,t) & =q\delta\left(x-(x_{0}+Vt)\right)\delta\left(y-y_{0}\right)\delta\left(z-z_{0}\right)
\end{alignedat}
\right.\label{eq:Coulomb-mozgo-1-1}
\end{equation}
where $(x_{0},y_{0},z_{0})$ is the initial position of the particle
at $t=0$, $X_{0}=\gamma\left(x-\left(x_{0}+Vt\right)\right)$ and
$\gamma=\left(1-\frac{V^{2}}{c^{2}}\right)^{-\frac{1}{2}}$. In this
case, it is no problem to characterize the particle + electromagnetic
field system as such which is, as a whole,\emph{ }in motion with velocity
$\mathbf{V}$ relative to $K$; as\emph{ }the electromagnetic field
is in collective motion with the point charge of velocity $\mathbf{V}$
(Fig.~\ref{fig:The-stationary-field-1}) in the following sense:%
\footnote{It must be pointed out that velocity $\mathbf{V}$ conceptually differs
from the speed of light $c$. Basically, $c$ is a constant of nature
in the Maxwell--Lorentz equations, which can emerge in the solutions
of the equations; and, in some cases, it can be interpreted as the
velocity of propagation of changes in the electromagnetic field. For
example, in our case, the stationary field of a uniformly moving point
charge, in collective motion with velocity $\mathbf{V},$ can be constructed
from the superposition of retarded potentials, in which the retardation
is calculated with velocity $c$; nevertheless, the two velocities
are different concepts. To illustrate the difference, consider the
fields of a charge at rest (\ref{eq:Coulomb field-1}), and in motion
(\ref{eq:Coulomb-mozgo-1-1}). The speed of light $c$ plays the same
role in both cases. Both fields can be constructed from the superposition
of retarded potentials in which the retardation is calculated with
velocity $c$. Also, in both cases, a small local perturbation in
the field configuration would propagate with velocity $c$. But still,
there is a consensus to say that the system described by (\ref{eq:Coulomb field-1})
is at rest while the one described by (\ref{eq:Coulomb-mozgo-1-1})
is moving with velocity $\mathbf{V}$ (together with $K'$, relative
to $K$.) A good analogy would be a Lorentz contracted moving rod:
$\mathbf{V}$ is the velocity of the rod, which differs from the speed
of sound in the rod. %
} 
\begin{figure}
\begin{centering}
\includegraphics[width=0.6\columnwidth]{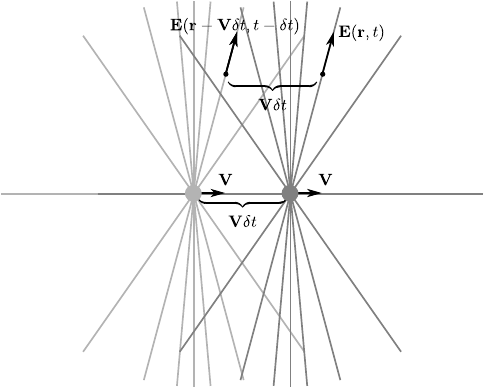}
\par\end{centering}

\caption{The stationary field of a uniformly moving point charge is in collective
motion together with the point charge \label{fig:The-stationary-field-1}}
\end{figure}
\begin{eqnarray}
\mathbf{E}(\mathbf{r},t) & = & \mathbf{E}(\mathbf{r}-\mathbf{V}\delta t,t-\delta t)\label{eq:mozgomezo-0-1}\\
\mathbf{B}(\mathbf{r},t) & = & \mathbf{B}(\mathbf{r}-\mathbf{V}\delta t,t-\delta t)\label{eq:mozgomezo-0-2}
\end{eqnarray}
that is,
\begin{eqnarray}
-\partial_{t}\mathbf{E}(\mathbf{r},t) & = & \mathsf{D}\mathbf{E}(\mathbf{r},t)\mathbf{V}\label{eq:mozgomezo1}\\
-\partial_{t}\mathbf{B}(\mathbf{r},t) & = & \mathsf{D}\mathbf{B}(\mathbf{r},t)\mathbf{V}\label{eq:mozgomezo2}
\end{eqnarray}
where $\mathsf{D}\mathbf{E}(\mathbf{r},t)$ and $\mathsf{D}\mathbf{B}(\mathbf{r},t)$
denote the spatial derivative operators (Jacobians for variables $x,y$
and $z$); that is, in components:
\begin{eqnarray}
-\partial_{t}E_{x}(\mathbf{r},t) & = & V_{x}\partial_{x}E_{x}(\mathbf{r},t)+V_{y}\partial_{y}E_{x}(\mathbf{r},t)+V_{z}\partial_{z}E_{x}(\mathbf{r},t)\label{eq:elsokomponens}\\
-\partial_{t}E_{y}(\mathbf{r},t) & = & V_{x}\partial_{x}E_{y}(\mathbf{r},t)+V_{y}\partial_{y}E_{y}(\mathbf{r},t)+V_{z}\partial_{z}E_{y}(\mathbf{r},t)\label{eq:masodikkomponens}\\
 & \vdots\nonumber \\
-\partial_{t}B_{z}(\mathbf{r},t) & = & V_{x}\partial_{x}B_{z}(\mathbf{r},t)+V_{y}\partial_{y}B_{z}(\mathbf{r},t)+V_{z}\partial_{z}B_{z}(\mathbf{r},t)\label{eq:utolsokomponens}
\end{eqnarray}

The uniformly moving point charge + electromagnetic field system not
only satisfies condition MR, but it satisfies the RP: Formula (\ref{eq:Coulomb-mozgo-1-1})
with $\mathbf{V}=0$ describes the static field of the particle when
they are at rest in $K$ :
\begin{equation}
F\,\,\,\left\{ \begin{alignedat}{1}E_{x}(x,y,z,t) & =\frac{q\left(x-x_{0}\right)}{\left(\left(x-x_{0}\right)^{2}+\left(y-y_{0}\right)^{2}+\left(z-z_{0}\right)^{2}\right)^{\nicefrac{3}{2}}}\\
E_{y}(x,y,z,t) & =\frac{q\left(y-y_{0}\right)}{\left(\left(x-x_{0}\right)^{2}+\left(y-y_{0}\right)^{2}+\left(z-z_{0}\right)^{2}\right)^{\nicefrac{3}{2}}}\\
E_{z}(x,y,z,t) & =\frac{q\left(z-z_{0}\right)}{\left(\left(x-x_{0}\right)^{2}+\left(y-y_{0}\right)^{2}+\left(z-z_{0}\right)^{2}\right)^{\nicefrac{3}{2}}}\\
B_{x}(x,y,z,t) & =0\\
B_{y}(x,y,z,t) & =0\\
B_{z}(x,y,z,t) & =0\\
\varrho(x,y,z,t) & =q\delta\left(x-x_{0}\right)\delta\left(y-y_{0}\right)\delta\left(z-z_{0}\right)
\end{alignedat}
\right.\label{eq:Coulomb field-1}
\end{equation}
By means of the Lorentz transformation rules one can express (\ref{eq:Coulomb-mozgo-1-1})
in terms of the `primed' variables of the co-moving reference frame
$K'$:
\begin{equation}
\begin{alignedat}{1}E'_{x}(x',y',z',t') & =\frac{q'\left(x'-x'_{0}\right)}{\left(\left(x'-x'_{0}\right)^{2}+\left(y'-y'_{0}\right)^{2}+\left(z'-z'_{0}\right)^{2}\right)^{\nicefrac{3}{2}}}\\
E'_{y}(x',y',z',t') & =\frac{q'\left(y'-y'_{0}\right)}{\left(\left(x'-x'_{0}\right)^{2}+\left(y'-y'_{0}\right)^{2}+\left(z'-z'_{0}\right)^{2}\right)^{\nicefrac{3}{2}}}\\
E'_{z}(x',y',z',t') & =\frac{q'\left(z'-z'_{0}\right)}{\left(\left(x'-x'_{0}\right)^{2}+\left(y'-y'_{0}\right)^{2}+\left(z'-z'_{0}\right)^{2}\right)^{\nicefrac{3}{2}}}\\
B'_{x}(x',y',z',t') & =0\\
B'_{y}(x',y',z',t') & =0\\
B'_{z}(x',y',z',t') & =0\\
\varrho(x',y',z',t') & =q\delta\left(x'-x'_{0}\right)\delta\left(y'-y'_{0}\right)\delta\left(z'-z'_{0}\right)
\end{alignedat}
\label{eq:Coulomb-vesszos-1}
\end{equation}
and we find that the result is indeed of the same form as (\ref{eq:Coulomb field-1}). 

So, in this well-known particular textbook example the RP is meaningful
and satisfied. This picture is in complete accordance with the standard
realistic interpretation of electromagnetic field:
\begin{quote}
In the standard interpretation of the formalism, the field strengths
$\mathbf{B}$ and $\mathbf{E}$ are interpreted realistically: The
interaction between charged particles are mediated by the electromagnetic
field, which is ontologically on a par with charged particles and
the state of which is given by the values of the field strengths.
(Frisch~2005, 28) 
\end{quote}
In this example, the charged particle and the coupled electromagnetic
field constitute a physical system which---just like Galileo's flies,
butterflies, fishes, droplets, and smoke---can be subject to the RP.
The states $F$ and $M_{\mathbf{V}}(F)$ can be meaningfully characterized
as such in which both parts of the physical system, the particle and
the electromagnetic field, are at rest or in motion with some velocity
relative to an arbitrary frame of reference\emph{.} We will show,
however, that this is not the case in general.

\section{How to Understand the RP for a General Electrodynamic System?\label{sec:Is-relativity-principle}}

What meaning can be attached to the words ``a coupled particles +
electromagnetic field system is \emph{in collective} \emph{motion}
with velocity $\mathbf{V}$'' ($\mathbf{V}=0$ included) relative
to a reference frame $K$, in general? One might think, we can read
off the answer to this question from the above example. However, focusing
on the electromagnetic field, the partial differential equations (\ref{eq:mozgomezo1})--(\ref{eq:mozgomezo2})
imply that
\begin{eqnarray}
\mathbf{E}(\mathbf{r},t) & = & \mathbf{E}_{0}(\mathbf{r}-\mathbf{V}t)\label{eq:stac1}\\
\mathbf{B}(\mathbf{r},t) & = & \mathbf{B}_{0}(\mathbf{r}-\mathbf{V}t)\label{eq:stac2}
\end{eqnarray}
with some time-independent $\mathbf{E}_{0}(\mathbf{r})$ and $\mathbf{B}_{0}(\mathbf{r})$.
In other words, the field must be a stationary one, that is, a translation
of a static field with velocity $\mathbf{V}$. But, (\ref{eq:stac1})--(\ref{eq:stac2})
is certainly not the case for a general solution of the equations
of CED; the field is not necessarily translating with a collective
velocity. The behavior of the field can be much more complex. Whatever
this complex behavior is, it is quite intuitive to assume that the
following general principle must hold:

\paragraph*{Mereological Principle of Motion (MPM)}

\emph{If an extended object as a whole is at rest or is in motion
with some velocity relative to an arbitrary reference frame $K$,
then all local parts of it are in motion with some local instantaneous
velocity $\mathbf{v}(\mathbf{r},t)$ relative to $K$.}

\medskip{}

Combining MPM with MR, we obtain the following:

\paragraph*{Local Minimal Requirement for the RP (LMR)}

\emph{The states of the extended physical system in question must
be meaningfully characterized as such in which all local parts of
the system are at rest or in motion with some local instantaneous
velocity relative to an arbitrary frame of reference.\medskip{}
}

Consequently, in case of electrodynamics, a straightforward minimal
requirement for the RP to be a meaningful statement is that (\ref{eq:mozgomezo-0-1})--(\ref{eq:mozgomezo-0-2})
must be satisfied at least \emph{locally} with some local and instantaneous
velocity $\mathbf{v}(\mathbf{r},t)$: it is quite natural to say that
the electromagnetic field at point $\mathbf{r}$ and time $t$ is\emph{
}moving\emph{ }with \emph{local} and \emph{instantaneous} velocity
$\mathbf{v}(\mathbf{r},t)$ if and only if
\begin{eqnarray}
\mathbf{E}(\mathbf{r},t) & = & \mathbf{E}\left(\mathbf{r}-\mathbf{v}(\mathbf{r},t)\delta t,t-\delta t\right)\label{eq:mozgomezo-0a-1}\\
\mathbf{B}(\mathbf{r},t) & = & \mathbf{B}\left(\mathbf{r}-\mathbf{v}(\mathbf{r},t)\delta t,t-\delta t\right)\label{eq:mozgomezo-0a-2}
\end{eqnarray}
are satisfied \emph{locally,} in an \emph{infinitesimally} small space
and time region at $(\mathbf{r},t)$, for infinitesimally small $\delta t$.
In other words, the equations (\ref{eq:mozgomezo1})--(\ref{eq:mozgomezo2})
must be satisfied \emph{locally} at point $(\mathbf{r},t)$ with a
local and instantaneous velocity $\mathbf{v}(\mathbf{r},t)$: 
\begin{eqnarray}
-\partial_{t}\mathbf{E}(\mathbf{r},t) & = & \mathsf{D}\mathbf{E}(\mathbf{r},t)\mathbf{v}(\mathbf{r},t)\label{eq:mozgomezo1a}\\
-\partial_{t}\mathbf{B}(\mathbf{r},t) & = & \mathsf{D}\mathbf{B}(\mathbf{r},t)\mathbf{v}(\mathbf{r},t)\label{eq:mozgomezo2a}
\end{eqnarray}

In other words, if the RP, as it is believed, applies to all situations
in electrodynamics, there must exist a local instantaneous velocity
field $\mathbf{v}(\mathbf{r},t)$ satisfying (\ref{eq:mozgomezo1a})--(\ref{eq:mozgomezo2a})
for all possible solutions of the following system of Maxwell--Lorentz
equations:

\begin{eqnarray}
\nabla\cdot\mathbf{E}\left(\mathbf{r},t\right) & = & \sum_{i=1}^{n}q^{i}\delta\left(\mathbf{r}-\mathbf{r}^{i}\left(t\right)\right)\label{eq:MLE1}\\
c^{2}\nabla\times\mathbf{B}\left(\mathbf{r},t\right)-\partial_{t}\mathbf{E}\left(\mathbf{r},t\right) & = & \sum_{i=1}^{n}q^{i}\delta\left(\mathbf{r}-\mathbf{r}^{i}\left(t\right)\right)\mathbf{v}^{i}\left(t\right)\label{eq:MLE2}\\
\nabla\cdot\mathbf{B}\left(\mathbf{r},t\right) & = & 0\label{eq:MLE3}\\
\nabla\times\mathbf{E}\left(\mathbf{r},t\right)+\partial_{t}\mathbf{B}\left(\mathbf{r},t\right) & = & 0\label{eq:MLE4}\\
m^{i}\gamma\left(\mathbf{v}^{i}\left(t\right)\right)\mathbf{a}^{i}(t) & = & q^{i}\left\{ \mathbf{E}\left(\mathbf{r}^{i}\left(t\right),t\right)+\mathbf{v}^{i}\left(t\right)\times\mathbf{B}\left(\mathbf{r}^{i}\left(t\right),t\right)\right.\,\,\,\nonumber \\
 &  & \left.-c^{-2}\mathbf{v}^{i}\left(t\right)\left(\mathbf{v}^{i}\left(t\right)\mathbf{\cdot E}\left(\mathbf{r}^{i}\left(t\right),t\right)\right)\right\} \label{eq:MLE5}\\
 &  & \,\,\,\,\,\,\,\,\,\,\,\,\,\,\,\,\,\,\,\,\,\,\,\,\,\,\,\,\,(i=1,2,\ldots,n)\nonumber 
\end{eqnarray}
where, $\gamma(\ldots)=\left(1-\frac{(\ldots)^{2}}{c^{2}}\right)^{-\frac{1}{2}}$,
$q^{i}$ is the electric charge and $m^{i}$ is the rest mass\emph{
}of the $i$-th particle. That is, substituting an arbitrary solution%
\footnote{Without entering into the details, it must be noted that the Maxwell--Lorentz
equations (\ref{eq:MLE1})--(\ref{eq:MLE5}), exactly in this form,
have \emph{no} solution. The reason is that the field is singular
at precisely the points where the coupling happens: on the trajectories
of the particles. The generally accepted answer to this problem is
that the real source densities are some ``smoothed out'' Dirac deltas,
determined by the physical laws of the internal worlds of the particles---which
are, supposedly, outside of the scope of CED. With this explanation,
for the sake of simplicity we leave the Dirac deltas in the equations.
Since our considerations here focuses on the electromagnetic field,
satisfying the four Maxwell equations, we must only assume that there
is a coupled dynamics---approximately described by equations (\ref{eq:MLE1})--(\ref{eq:MLE5})---and
that it constitutes an initial value problem. In fact, Theorem~\ref{thm:There-is-a}
could be stated in a weaker form, by leaving the concrete form and
dynamics of the source densities unspecified.%
} of (\ref{eq:MLE1})--(\ref{eq:MLE5}) into (\ref{eq:mozgomezo1a})--(\ref{eq:mozgomezo2a}),
the overdetermined system of equations must have a solution for $\mathbf{v}(\mathbf{r},t)$. 

However, one encounters the following difficulty: 
\begin{thm}
\label{thm:There-is-a}There is a dense subset of solutions of the
coupled Maxwell--Lorentz equations (\ref{eq:MLE1})--(\ref{eq:MLE5})
for which there cannot exist a local instantaneous velocity field
$\mathbf{v}(\mathbf{r},t)$ satisfying (\ref{eq:mozgomezo1a})--(\ref{eq:mozgomezo2a}).\end{thm}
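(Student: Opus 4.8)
The plan is to reduce the question to pointwise linear algebra at a single spacetime point $(\mathbf{r}_0,t_0)$ lying off all particle trajectories. At such a vacuum point the field obeys the homogeneous Maxwell equations, so by (\ref{eq:MLE2}) and (\ref{eq:MLE4}) the time derivatives are fixed by the spatial curls, $\partial_t\mathbf{E}=c^2\nabla\times\mathbf{B}$ and $\partial_t\mathbf{B}=-\nabla\times\mathbf{E}$. Regarding the three components of $\mathbf{v}$ as the unknowns, the pair (\ref{eq:mozgomezo1a})--(\ref{eq:mozgomezo2a}) becomes a single inhomogeneous linear system $M\mathbf{v}=\mathbf{b}$, where $M$ is the $6\times3$ matrix obtained by stacking the Jacobians $\mathsf{D}\mathbf{E}$ and $\mathsf{D}\mathbf{B}$, and $\mathbf{b}=(-c^2\nabla\times\mathbf{B},\,\nabla\times\mathbf{E})$ is the stacked right-hand side. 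Since $M$ has only three columns, $\mathrm{rank}\,M\le 3$, whereas the augmented matrix $[M\mid\mathbf{b}]$ has four. By the Rouch\'e--Capelli criterion the system is unsolvable precisely when $\mathrm{rank}[M\mid\mathbf{b}]>\mathrm{rank}\,M$, and in particular whenever the four columns of $[M\mid\mathbf{b}]$ are linearly independent in $\mathbb{R}^6$, i.e. $\mathrm{rank}[M\mid\mathbf{b}]=4$.

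First I would exhibit one admissible jet for which this occurs. The only pointwise constraints the homogeneous Maxwell equations impose on the first spatial derivatives are the two trace conditions $\mathrm{tr}\,\mathsf{D}\mathbf{E}=\nabla\cdot\mathbf{E}=0$ and $\mathrm{tr}\,\mathsf{D}\mathbf{B}=\nabla\cdot\mathbf{B}=0$ coming from (\ref{eq:MLE1}) and (\ref{eq:MLE3}); the curl equations merely define the (independent) time derivatives. Hence one may freely set $\mathsf{D}\mathbf{B}=0$ at the point (trace zero, admissible) while choosing $\mathsf{D}\mathbf{E}$ invertible with $\nabla\times\mathbf{E}\neq0$. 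Then (\ref{eq:mozgomezo2a}) reads $-\partial_t\mathbf{B}=\nabla\times\mathbf{E}=0$, contradicting $\nabla\times\mathbf{E}\neq0$, so no $\mathbf{v}$ exists; moreover here the three columns of $M$ span the top three coordinates while $\mathbf{b}$ has vanishing top and nonvanishing bottom, so $\mathrm{rank}[M\mid\mathbf{b}]=4$. Because this rank condition is the non-vanishing of at least one $4\times4$ minor, and these minors are polynomials in the jet data, the obstructed jets form a nonempty---hence open and dense---semialgebraic subset of the $16$-dimensional variety of admissible first-order jets.

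Next I would promote a pointwise jet to a genuine solution of the coupled system (\ref{eq:MLE1})--(\ref{eq:MLE5}). Using the initial-value formulation alluded to in the text, prescribing $\mathbf{E},\mathbf{B}$ together with their spatial derivatives on the slice $t=t_0$ is legitimate Cauchy data exactly when the two divergence constraints hold; the curl equations then evolve the field and reproduce the prescribed first-order jet at $(\mathbf{r}_0,t_0)$, while the Lorentz equation (\ref{eq:MLE5}) fixes the particle motion. Thus every admissible jet is the jet of some solution, and in particular the obstructed jets of the previous paragraph are realized.

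Finally, for density I would start from an arbitrary solution and perturb its field initial data in a small neighborhood of a chosen vacuum point. Since the obstructed jets are dense among admissible jets, and since an admissible first-order jet at a point can be altered by an arbitrarily small source-free perturbation of the Cauchy data, one can push the jet at $(\mathbf{r}_0,t_0)$ into the open obstructed set by an arbitrarily small change of data. Continuous dependence for the Maxwell--Lorentz initial value problem then yields a nearby genuine solution whose fields and trajectories differ from the original as little as desired and which carries the obstruction, proving density. I expect the real work to be concentrated precisely here: the pointwise linear algebra and the genericity are routine, but the density step requires care in fixing the topology on the solution space, in controlling the back-reaction of the small field perturbation on the trajectories through the coupling in (\ref{eq:MLE5}), and in checking that the perturbation lands strictly inside the open obstructed set rather than on its boundary.
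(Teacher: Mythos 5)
Your proposal is correct and follows essentially the same route as the paper's proof: at a source-free point the Maxwell equations fix $\partial_t\mathbf{E}$ and $\partial_t\mathbf{B}$ in terms of the spatial curls, so (\ref{eq:mozgomezo1a})--(\ref{eq:mozgomezo2a}) become an overdetermined linear system in $\mathbf{v}$ that an arbitrarily small admissible change of the field's spatial Jacobian renders inconsistent, and the initial value problem realizes that change as a nearby solution. The paper implements the same idea via an explicit one-parameter deformation $\mathbf{E}^{\#}_{\lambda}$ of the Jacobian that preserves $\mathsf{D}\mathbf{E}\,\mathbf{v}$ and the divergence while altering $\nabla\times\mathbf{E}$, in place of your rank/minor genericity argument, and it is no more rigorous than you are about the topology on the solution space or continuous dependence for the coupled system.
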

\begin{proof}
The proof is almost trivial for a locus $(\mathbf{r},t)$ where there
is a charged point particle. However, in order to avoid the eventual
difficulties concerning the physical interpretation, we are providing
a proof for a point $(\mathbf{r}_{*},t_{*})$ where there is assumed
no source at all. 

Consider a solution $\left(\mathbf{r}^{1}\left(t\right),\mathbf{r}^{2}\left(t\right),\ldots,\mathbf{r}^{n}\left(t\right),\mathbf{E}(\mathbf{r},t),\mathbf{B}(\mathbf{r},t)\right)$
of the coupled Maxwell--Lorentz equations (\ref{eq:MLE1})--(\ref{eq:MLE5}),
which satisfies (\ref{eq:mozgomezo1a})--(\ref{eq:mozgomezo2a}).
At point $(\mathbf{r}_{*},t_{*})$, the following equations hold:
\begin{eqnarray}
-\partial_{t}\mathbf{E}(\mathbf{r}_{*},t_{*}) & = & \mathsf{D}\mathbf{E}(\mathbf{r}_{*},t_{*})\mathbf{v}(\mathbf{r}_{*},t_{*})\label{eq:mozgomezo1c}\\
-\partial_{t}\mathbf{B}(\mathbf{r}_{*},t_{*}) & = & \mathsf{D}\mathbf{B}(\mathbf{r}_{*},t_{*})\mathbf{v}(\mathbf{r}_{*},t_{*})\label{eq:mozgomezo2c}\\
\partial_{t}\mathbf{E}(\mathbf{r}_{*},t_{*}) & = & c^{2}\nabla\times\mathbf{B}(\mathbf{r}_{*},t_{*})\label{eq:elsomaxwell}\\
-\partial_{t}\mathbf{B}(\mathbf{r}_{*},t_{*}) & = & \nabla\times\mathbf{E}(\mathbf{r}_{*},t_{*})\label{eq:masodikmaxwell}\\
\nabla\cdot\mathbf{E}(\mathbf{r}_{*},t_{*}) & = & 0\label{eq:dive}\\
\nabla\cdot\mathbf{B}(\mathbf{r}_{*},t_{*}) & = & 0\label{eq:divb}
\end{eqnarray}
Without loss of generality we can assume---at point $\mathbf{r}_{*}$
and time $t_{*}$---that operators $\mathsf{D}\mathbf{E}(\mathbf{r}_{*},t_{*})$
and $\mathsf{D}\mathbf{B}(\mathbf{r}_{*},t_{*})$ are invertible and
$v_{z}(\mathbf{r}_{*},t_{*})\neq0$.

Now, consider a $3\times3$ matrix $J$ such that 
\begin{equation}
J=\left(\begin{array}{ccc}
\partial_{x}E_{x}(\mathbf{r}_{*},t_{*}) & J_{xy} & J_{xz}\\
\partial_{x}E_{y}(\mathbf{r}_{*},t_{*}) & \partial_{y}E_{y}(\mathbf{r}_{*},t_{*}) & \partial_{z}E_{y}(\mathbf{r}_{*},t_{*})\\
\partial_{x}E_{z}(\mathbf{r}_{*},t_{*}) & \partial_{y}E_{z}(\mathbf{r}_{*},t_{*}) & \partial_{z}E_{z}(\mathbf{r}_{*},t_{*})
\end{array}\right)\label{eq:jacobi}
\end{equation}
with

\begin{eqnarray}
J_{xy} & = & \partial_{y}E_{x}(\mathbf{r}_{*},t_{*})+\lambda\label{eq:jatek0}\\
J_{xz} & = & \partial_{z}E_{x}(\mathbf{r}_{*},t_{*})-\lambda\frac{v_{y}(\mathbf{r}_{*},t_{*})}{v_{z}(\mathbf{r}_{*},t_{*})}\label{eq:jatek}
\end{eqnarray}
by virtue of which
\begin{eqnarray}
J_{xy}v_{y}(\mathbf{r}_{*},t_{*})+J_{xz}v_{z}(\mathbf{r}_{*},t_{*}) & = & v_{y}(\mathbf{r}_{*},t_{*})\partial_{y}E_{x}(\mathbf{r}_{*},t_{*})\nonumber \\
 &  & +v_{z}(\mathbf{r}_{*},t_{*})\partial_{z}E_{x}(\mathbf{r}_{*},t_{*})
\end{eqnarray}
Therefore, $J\mathbf{v}(\mathbf{r}_{*},t_{*})=\mathsf{D}\mathbf{E}(\mathbf{r}_{*},t_{*})\mathbf{v}(\mathbf{r}_{*},t_{*})$.
There always exists a vector field $\mathbf{E}_{\lambda}^{\#}(\mathbf{r})$
such that its Jacobian matrix at point $\mathbf{r}_{*}$ is equal
to $J$. Obviously, from (\ref{eq:dive}) and (\ref{eq:jacobi}),
$\nabla\cdot\mathbf{E}_{\lambda}^{\#}(\mathbf{r}_{*})=0$. Therefore,
there exists a solution of the Maxwell--Lorentz equations, such that
the electric and magnetic fields $\mathbf{E}_{\lambda}(\mathbf{r},t)$
and $\mathbf{B}_{\lambda}(\mathbf{r},t)$ satisfy the following conditions:%
\footnote{$\mathbf{E}_{\lambda}^{\#}(\mathbf{r})$ and $\mathbf{B}_{\lambda}(\mathbf{r},t_{*})$
can be regarded as the initial configurations at time $t_{*}$; we
do not need to specify a particular choice of initial values for the
sources.%
} 
\begin{eqnarray}
\mathbf{E}_{\lambda}(\mathbf{r},t_{*}) & = & \mathbf{E}_{\lambda}^{\#}(\mathbf{r})\\
\mathbf{B}_{\lambda}(\mathbf{r},t_{*}) & = & \mathbf{B}(\mathbf{r},t_{*})
\end{eqnarray}
At $(\mathbf{r}_{*},t_{*})$, such a solution obviously satisfies
the following equations: 
\begin{eqnarray}
\partial_{t}\mathbf{E}_{\lambda}(\mathbf{r}_{*},t_{*}) & = & c^{2}\nabla\times\mathbf{B}(\mathbf{r}_{*},t_{*})\label{eq:elsomaxwell*}\\
-\partial_{t}\mathbf{B}_{\lambda}(\mathbf{r}_{*},t_{*}) & = & \nabla\times\mathbf{E}_{\lambda}^{\#}(\mathbf{r}_{*})\label{eq:masodikmaxwell*}
\end{eqnarray}
therefore 
\begin{equation}
\partial_{t}\mathbf{E}_{\lambda}(\mathbf{r}_{*},t_{*})=\partial_{t}\mathbf{E}(\mathbf{r}_{*},t_{*})\label{eq:idoderivaltakegyenlok}
\end{equation}

As a little reflection shows, if $\mathsf{D}\mathbf{E}_{\lambda}^{\#}(\mathbf{r}_{*})$,
that is $J$, happened to be not invertible, then one can choose a
\emph{smaller} $\lambda$ such that $\mathsf{D}\mathbf{E}_{\lambda}^{\#}(\mathbf{r}_{*})$
becomes invertible (due to the fact that $\mathsf{D}\mathbf{E}(\mathbf{r}_{*},t_{*})$
is invertible), and, at the same time, 
\begin{equation}
\nabla\times\mathbf{E}_{\lambda}^{\#}(\mathbf{r}_{*})\neq\nabla\times\mathbf{E}(\mathbf{r}_{*},t_{*})\label{eq:rotnem}
\end{equation}
Consequently, from \eqref{eq:idoderivaltakegyenlok} , \eqref{eq:jatek}
and \eqref{eq:mozgomezo1c} we have
\begin{equation}
-\partial_{t}\mathbf{E}_{\lambda}(\mathbf{r}_{*},t_{*})=\mathsf{D}\mathbf{E}_{\lambda}(\mathbf{r}_{*},t_{*})\mathbf{v}(\mathbf{r}_{*},t_{*})=\mathsf{D}\mathbf{E}_{\lambda}^{\#}(\mathbf{r}_{*})\mathbf{v}(\mathbf{r}_{*},t_{*})
\end{equation}
and $\mathbf{v}(\mathbf{r}_{*},t_{*})$ is uniquely determined by
this equation. On the other hand, from \eqref{eq:masodikmaxwell*}
and \eqref{eq:rotnem} we have
\begin{equation}
-\partial_{t}\mathbf{B}_{\lambda}(\mathbf{r}_{*},t_{*})\neq\mathsf{D}\mathbf{B}_{\lambda}(\mathbf{r}_{*},t_{*})\mathbf{v}(\mathbf{r}_{*},t_{*})=\mathsf{D}\mathbf{B}(\mathbf{r}_{*},t_{*})\mathbf{v}(\mathbf{r}_{*},t_{*})
\end{equation}
because $\mathsf{D}\mathbf{B}(\mathbf{r}_{*},t_{*})$ is invertible,
too. That is, for $\mathbf{E}_{\lambda}(\mathbf{r},t)$ and $\mathbf{B}_{\lambda}(\mathbf{r},t)$
there is no local and instantaneous velocity at point $\mathbf{r}_{*}$
and time $t_{*}$. 

At the same time, $\lambda$ can be arbitrary small, and 
\begin{eqnarray}
\lim_{\lambda\rightarrow0}\mathbf{E}_{\lambda}(\mathbf{r},t) & = & \mathbf{E}(\mathbf{r},t)\\
\lim_{\lambda\rightarrow0}\mathbf{B}_{\lambda}(\mathbf{r},t) & = & \mathbf{B}(\mathbf{r},t)
\end{eqnarray}
Therefore solution $\left(\mathbf{r}_{\lambda}^{1}\left(t\right),\mathbf{r}_{\lambda}^{2}\left(t\right),\ldots,\mathbf{r}_{\lambda}^{n}\left(t\right),\mathbf{E}_{\lambda}(\mathbf{r},t),\mathbf{B}_{\lambda}(\mathbf{r},t)\right)$
can fall into an arbitrary small neighborhood of $\left(\mathbf{r}^{1}\left(t\right),\mathbf{r}^{2}\left(t\right),\ldots,\mathbf{r}^{n}\left(t\right),\mathbf{E}(\mathbf{r},t),\mathbf{B}(\mathbf{r},t)\right)$.%
\footnote{Notice that our investigation has been concerned with the general
laws of Maxwell--Lorentz electrodynamics of a coupled particles +
electromagnetic field system. The proof of the theorem was essentially
based on the presumption that all solutions of the Maxwell--Lorentz
equations, determined by \emph{any} initial state of the particles
+ electromagnetic field system, corresponded to physically possible
configurations of the electromagnetic field. It is sometimes claimed,
however, that the solutions must be restricted by the so called retardation
condition, according to which all physically admissible field configurations
must be generated from the retarded potentials belonging to some pre-histories
of the charged particles (Jánossy 1971, p. 171; Frisch 2005, p.~145).
There is no obvious answer to the question of how Theorem~\ref{thm:There-is-a}
is altered under such additional condition.%
}
\end{proof}
Thus, the meaning of the concept of ``electromagnetic field moving\emph{
}with a local instantaneous velocity $\mathbf{v}(\mathbf{r},t)$ at
point $\mathbf{r}$ and time $t$'', that we obtained by a straightforward
generalization of the example of the stationary field of a uniformly
moving charge, is untenable. We do not see other available rational
meaning of this concept. Such a concept, on the other hand, would
be a necessary conceptual plugin to the RP. In any event, lacking
a better suggestion, we must conclude that the RP is a statement which
is meaningless for a general electrodynamic situation.

\section{No Persistence without Motion}

There is a long debate in contemporary metaphysics whether and in
what sense instantaneous velocity can be regarded as an intrinsic
property of an object at a given moment of time (Butterfield~2006;
Arntzenius~2000; Tooley~1988; Hawley~2001, 76--80; Sider~2001,
34--35). There seems to be, however, a consensus that
\begin{quote}
{[}$\ldots${]} the notion of velocity presupposes the persistence
of the object concerned. For average velocity is a quotient, whose
numerator must be the distance traversed by the given persisting object:
otherwise you could give me a superluminal velocity by dividing the
distance between me and the Sun by a time less than eight minutes.
So presumably, average velocity\textquoteright{}s limit, instantaneous
velocity, also presupposes persistence. (Butterfield 2005, 257)
\end{quote}
In this section we argue that the opposite is also true: the notion
of persistence requires the existence of instantaneous velocity. 

It is common to all theories of persistence---endurantism vs. perdurantism---that
a persisting entity needs to have some package of individuating properties,
in terms of which one can express that two things in two different
spatiotemporal regions are identical, or at least constitute different
spatiotemporal parts of the same entity. Butterfield writes: 
\begin{quote}
I believe that {[}the criteria of identity{]} are largely independent
of the endurantism--perdurantism debate; and in particular, that endurantism
and perdurantism {[}...{]} face some common questions about criteria
of identity, and can often give the same, or similar, answers to them.
{[}...{]} {[}A{]}ll parties need to provide criteria of identity for
objects, presumably invoking the usual notions of qualitative similarity
and-or causation (Butterfield 2005, 248--289)
\end{quote}
Without loss of generality we may assume that each of these individuating
properties can be characterized as such that a certain quantity $f_{i}$
takes a certain value. Consider a primitive example: the redness of
the ball in Fig.~\ref{fig:ball1} can be characterized as such that
the wavelength of light reflected from the instantaneous surface of
the ball is around 650 nm. Or, more abstractly, just imagine a quantity
the spatiotemporal distribution of which takes value $1$ in a region
where redness is instantiated---for example, on the locus of the ball---and
takes value $0$ otherwise. 

\begin{figure}
\begin{centering}
\includegraphics[width=0.75\columnwidth]{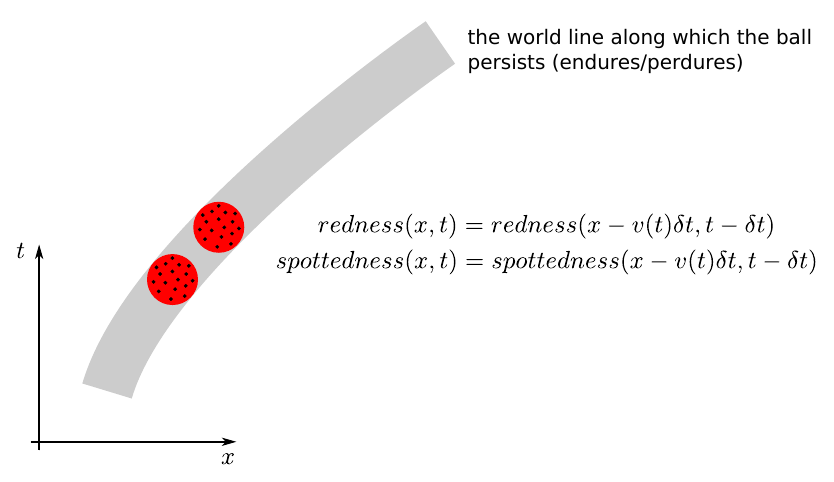}
\par\end{centering}

\caption{A ball is individuated by its redness, spottedness, etc. \label{fig:ball1}}
\end{figure}

Now, in order to express the fact of persistence, consider a given
$n$-tuple of individuating quantities $\left\{ f_{i}\right\} _{i=1}^{n}$
that is supposed to trace out the trajectory or spacetime tube along
which the entity persists. The different theories of persistence disagree
in the actual content of the package $\left\{ f_{i}\right\} _{i=1}^{n}$,
these differences are not important from the point of view of our
present concern. The following necessary condition is however common
to all intuitions: 
\begin{eqnarray}
f_{i}(\mathbf{r},t) & = & f_{i}(\mathbf{r}-\mathbf{v}\left(t\right)\delta t,t-\delta t)\label{eq:persistence_pointlike}\\
 &  & \,\,\,\,\,\,\,\,\,\,\,\,\,\,(i=1,2,\ldots,n)\nonumber 
\end{eqnarray}
for all $(\mathbf{r},t)$ where the object is present, at least for
a small, infinitesimal, interval of time $\delta t$ (Fig.~\ref{fig:ball1}),
with some instantaneous velocity $\mathbf{v}\left(t\right)$. Without
loss of generality we may assume that all functions in $\left\{ f_{i}\right\} _{i=1}^{n}$
are smooth (if not, we can approximate them by smooth functions).
Expressing \eqref{eq:persistence_pointlike} in a differential form,
we have%
\footnote{For the sake of simplicity we may assume that all $f_{i}$ are scalar
functions, and $\mathsf{D}f_{i}$ is simply $\mathsf{grad}\, f_{i}$.%
}
\begin{eqnarray}
-\partial_{t}f_{i}(\mathbf{r},t) & = & \mathsf{D}f_{i}(\mathbf{r},t)\mathbf{v}(t)\label{eq:equation_of_persistence_pointlike}\\
 &  & \,\,\,\,\,\,\,(i=1,2,\ldots,n)\nonumber 
\end{eqnarray}
In other words, the entity is \emph{in motion} with some instantaneous
velocities $\mathbf{v}(t)$. Let us call \eqref{eq:equation_of_persistence_pointlike}
the \emph{equations of persistence}.

So far we considered the situation when the persistence can be formulated
in terms of individuating quantities $\left\{ f_{i}\right\} _{i=1}^{n}$
characterizing the entity in question \emph{as a whole}. Generally,
however, this is not necessarily the case. An extended object may
persist, even if its holistic properties do not satisfy equations
(\ref{eq:equation_of_persistence_pointlike}). Following however the
same intuition by which we formulated the Mereological Principle of
Motion, we propose the following thesis:

\paragraph*{Mereological Principle of Persistence (MPP)}

\emph{If an extended object, as a whole, persists, then its all local
parts persist.}

\medskip{}

Accordingly, the persistence of an extended object requires the following
condition for the spatial distributions: 
\begin{eqnarray}
f_{i}(\mathbf{r},t) & = & f_{i}(\mathbf{r}-\mathbf{v}\left(\mathbf{r},t\right)\delta t,t-\delta t)\\
 &  & \,\,\,\,\,\,\,\,\,\,\,\,\,\,\,\,\,\,\,\,(i=1,2,\ldots,n)\nonumber 
\end{eqnarray}
or 
\begin{eqnarray}
-\partial_{t}f_{i}(\mathbf{r},t) & = & \mathsf{D}f_{i}(\mathbf{r},t)\mathbf{v}(\mathbf{r},t)\label{eq:equation_of_persistence}\\
 &  & \,\,\,\,\,\,\,\,\,\,(i=1,2,\ldots,n)\nonumber 
\end{eqnarray}
for all $(\mathbf{r},t)$ where the extended object is present; where
$\mathbf{v}(\mathbf{r},t)$ is a local and instantaneous velocity
field characterizing the \emph{motion} of the local part of the extended
entity at the spatiotemporal locus $(\mathbf{r},t)$ (Fig~\ref{fig:local_balls}).
Let us call \eqref{eq:equation_of_persistence} the \emph{equations
of persistence for an extended object.}

\begin{figure}
\begin{centering}
\includegraphics[width=0.6\columnwidth]{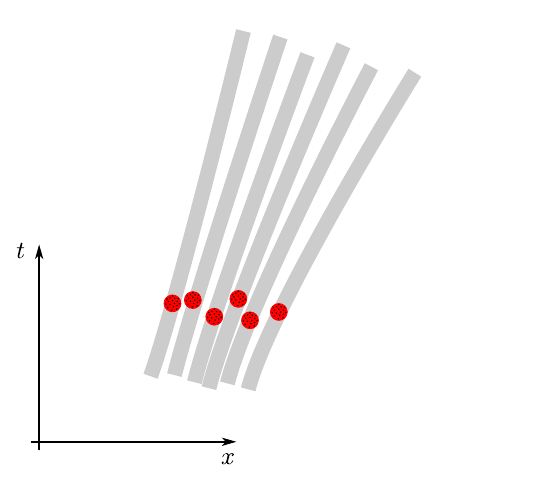}
\par\end{centering}

\caption{Persistence of an extended object requires the persitence of its local
parts\label{fig:local_balls}}
\end{figure}

\bigskip{}

\section{The Ontological Incompleteness of CED}

As we have seen in Theorem~\ref{thm:There-is-a}, the distributions
of the two fundamental electrodynamic field strengths, $\mathbf{E}(\mathbf{r},t)$
and $\mathbf{B}(\mathbf{r},t)$, do not satisfy the equations of persistence
(\ref{eq:equation_of_persistence}). Therefore, the electromagnetic
field individuated by the field strengths cannot be regarded as a
persisting physical object; in other words, electromagnetic field
cannot be regarded as being a real physical entity existing in space
and time. This seems to contradict to the usual realistic interpretation
of CED. 

If electromagnetic field is a real entity persisting in space and
time, then it cannot be individuated by the field strengths. That
is, there must exist some quantities other than the field strengths,
perhaps outside of the scope of CED, individuating the local parts
of electromagnetic field. This suggests that CED is an ontologically
incomplete theory.

How to conceive properties, different from the field strengths, which
are capable of individuating the electromagnetic field? One might
think of them as some ``finer'', more fundamental, properties of
the field, not only individuating it as a persisting extended object,
but also determining the values of the field strengths. However, the
following easily verifiable theorem shows that this determination
cannot be so simple:
\begin{thm}
Let $\left\{ f_{i}\right\} _{i=1}^{n}$ be a package of quantities
for which there exist a local instantaneous velocity field $\mathbf{v}(\mathbf{r},t)$
satisfying the equations of persistence \eqref{eq:equation_of_persistence}\emph{
}in a given spacetime region. If a quantity $\Phi$ is a function
of the quantities \textup{$f_{1},f_{2},\ldots,f_{n}$} in the following
form: 
\[
\Phi(\mathbf{r},t)=\Phi\left(f_{1}(\mathbf{r},t),f_{2}(\mathbf{r},t),...,f_{n}(\mathbf{r},t)\right)
\]
then $\Phi$ also obeys the equation of persistence 
\begin{eqnarray*}
-\partial_{t}\Phi(\mathbf{r},t) & = & \mathsf{D}\Phi(\mathbf{r},t)\mathbf{v}(\mathbf{r},t)
\end{eqnarray*}
with the same local instantaneous velocity field $\mathbf{v}(\mathbf{r},t)$,
within the same spacetime region.
\end{thm}
Therefore, $\mathbf{E}(\mathbf{r},t)$ and $\mathbf{B}(\mathbf{r},t)$
cannot supervene pointwise upon some more fundamental individuating
quantities satisfying the persistence equations. However, they might
supervene in some non-local sense. For example, imagine that $\mathbf{E}(\mathbf{r},t)$
and $\mathbf{B}(\mathbf{r},t)$ provide only a course-grained characterization
of the field, but there exist some more fundamental fields $\mathbf{e}(\mathbf{r},t)$
and $\mathbf{b}(\mathbf{r},t)$, such that
\begin{align*}
\mathbf{E}(\mathbf{r},t) & =\underset{\Omega}{\int}\mathbf{e}\left(\mathbf{r}',t'\right)d^{4}(\mathbf{r},t)\\
\mathbf{B}(\mathbf{r},t) & =\underset{\Omega}{\int}\mathbf{b}\left(\mathbf{r}',t\right)d^{4}(\mathbf{r},t)
\end{align*}
where $\Omega$ is a neighbourhood of $(\mathbf{r},t)$ (Fig.~\ref{fig:nonlocal_superrvenience}).
In this case, the more fundamental quantities $\mathbf{e}(\mathbf{r},t)$
and $\mathbf{b}(\mathbf{r},t)$ may satisfy the equations of persistence,
while $\mathbf{E}(\mathbf{r},t)$ and $\mathbf{B}(\mathbf{r},t)$,
supervening on $\mathbf{e}(\mathbf{r},t)$ and $\mathbf{b}(\mathbf{r},t)$,
may not.

\begin{figure}
\begin{centering}
\includegraphics[width=0.75\columnwidth]{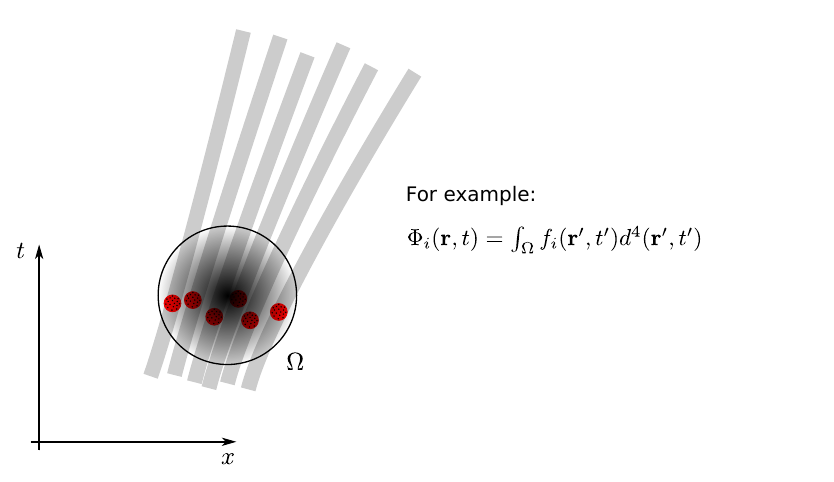}
\par\end{centering}

\caption{A non-local form of supervenience \label{fig:nonlocal_superrvenience}}
\end{figure}

\section*{Acknowledgment}

The research was partly supported by the OTKA Foundation, No.~K100715.

\section*{}

\section*{References}

\noindent Arntzenius,~Frank.~2000. ``Are There Really Instantaneous
Velocities?'' \emph{The Monist} 83:187--208.

~

\noindent Brown,~Harvey~R. 2005. \emph{Physical Relativity -- Space-Time
Structure from a Dynamical Perspectiv}e. Oxford, NY: Oxford University
Press.

~

\noindent Butterfield,~Jeremy. 2005. ``On the Persistence of Particles'',
\emph{Foundations of Physics} 35:233--269. 

~

\noindent Butterfield,~Jeremy. 2006. ``The Rotating Discs Argument
Defeated'', \emph{British Journal for the Philosophy of Science}
57:1--45.

~

\noindent Frisch,~Mathias. 2005. \emph{Inconsistency, Asymmetry,
and Non-Locality}, Oxford: Oxford University Press. 

~

\noindent Galilei, Galileo. 1953. \emph{Dialogue concerning the Two
Chief World Systems, Ptolemaic \& Copernican}, Berkeley: University
of California Press.

~

\noindent Gömöri,~Márton,~and~László~E.~Szabó. 2013. ``Formal
Statement of the Special Principle of Relativity'', \emph{Synthese},
DOI: 10.1007/s11229-013-0374-1

~

\noindent Hawley,~Katherine. 2001. \emph{How Things Persist}, Oxford:
Oxford University Press.

~

\noindent Jackson,~John~David. 1999. \emph{Classical Electrodynamics
(Third edition).} Hoboken, NJ: John Wiley \& Sons.

~

\noindent Jánossy,~Lajos. 1971. \emph{Theory of Relativity Based
On Physical Reality}, Budapest: \inputencoding{latin2}\foreignlanguage{magyar}{Akadémiai
Kiadó}\inputencoding{latin9}.\emph{}

~

\noindent Sider,~Theodore. 2001. \emph{Four-Dimensionalism}, Oxford:
Oxford University Press.

~

\noindent Tooley,~Michael. 1988. ``In Defence of the Existence of
States of Motion'', \emph{Philosophical Topics} 16:225--254. 
\begin{lyxlist}{00.00.0000}
\item [{}]~
\item [{}]~\end{lyxlist}

\end{document}